\documentclass[11pt]{article}
\usepackage{amsfonts,amsmath}
\usepackage{url}
\usepackage{enumitem}
\usepackage{graphicx}
\usepackage{subcaption}
\usepackage{mathrsfs}
\usepackage{bbm}
\usepackage{hyperref}

\newtheorem{theorem}{Theorem}[section]

\newtheorem{definition}[theorem]{\textbf{Definition}}
\newtheorem{remark}[theorem]{\textbf{Remark}}
\newtheorem{lemma}[theorem]{\textbf{Lemma}}
\newtheorem{corollary}[theorem]{\textbf{Corollary}}

\DeclareMathOperator{\vol}{vol}
\DeclareMathOperator{\length}{length}
\DeclareMathOperator{\fatness}{fatness}
\DeclareMathOperator{\vdis}{VDis}

 \newcommand{\R}{\mathbb{R}}
\renewcommand{\H}{\mathbb{H}}
\renewcommand{\S}{\mathbb{S}}

\newenvironment{proof}[1][{}]{
   \begin{trivlist}\item[]\textit{Proof #1}\quad}%
   {\hfill\hspace*{\fill}~$\square$\end{trivlist}}

\let\omarginpar\marginpar
\def\marginpar#1{\omarginpar{\footnotesize#1}}

 \title{\MakeUppercase{Simplicial subdivision of simplices of arbitrary dimension in spaces of constant curvature with bounded quality}\thanks{\scriptsize The research leading to these results has received funding from the European Research Council (ERC) under  the  European  Union's  Seventh  Framework  Programme (FP/2007-2013)  /  ERC  Grant Agreement No. 339025 GUDHI (Algorithmic Foundations of Geometry Understanding in Higher Dimensions). 
\newline
The first author is further supported by the French government, through the 3IA C\^ote d’Azur Investments in the Future project managed by the National Research Agency (ANR) with the reference number ANR-19-P3IA-0002.
The last author is/has been supported by the European Union's Horizon 2020 research and innovation programme under the Marie Sk{\l}odowska-Curie grant agreement No. 754411, the Austrian science fund (FWF) M-3073, ANR grant StratMesh, ANR-24-CE48-1899, and the welcome
 package from IDEX of the Université Côte d’Azur, ANR-15-IDEX-01.}
}

\author{Jean-Daniel~Boissonnat,\thanks{Datashape, Inria centre Universit{\'e} C{\^o}te d'Azur, France,
\texttt{jean-daniel.boissonnat@inria.fr}}\,
Hana~Dal~Poz~Kou\v{r}imsk\'a,\thanks{{University of Potsdam, Germany}, 
\texttt{hana.dal.poz.kourimska@uni-potsdam.de}}\,
\\[0.5em]
Arijit~Ghosh,\thanks{{ACM Unit, Indian statistical institute, Kolkata, India}, 
\texttt{arijitiitkgpster@gmail.com}}\,
and 
Mathijs~Wintraecken\thanks{{Datashape, Inria centre Universit{\'e} C{\^o}te d'Azur, France}, 
\texttt{mathijs.wintraecken@inria.fr}}
}
\date{}

\begin{document}

\maketitle

\begin{abstract}
In 1942, Freudenthal showed that a simplex in Euclidean space can be subdivided such that the quality (well-shapedness of the simplex, quantified in terms of e.g. fatness) of the simplices in the subdivision is lower bounded. This answered a question of Brouwer.  Recently, Brunck discussed the same problem for simplices in two-dimensional spaces of constant curvature and provided a closely related construction.  
In this paper we generalize Brunck's result to arbitrary dimensional spaces of constant curvature by combining Freudenthal's construction and radial projection. We contrast this approach with Brunck's construction. 
\end{abstract}

\section{Introduction}
Subdividing simplices while preserving quality is a longstanding problem, which dates back at least to Brouwer. In 1942, Freudenthal addressed Brouwer's question and showed in~\cite{freudenthal1942simplizialzerlegungen} that simplices in Euclidean space can be subdivided without decreasing the quality. Also within computational geometry the question has a long history, see e.g. \cite{BERN1995}. 
Recently, simplices of good quality have regained attention because of their importance to numerical accuracy in numerical partial differential equations~\cite{shewchuk2002good}.

Simplex subdivision with bounded quality in two-dimensional spaces of constant curvature has recently been investigated by Brunck~\cite{brunck1}. In this paper, we show that using Freudenthal's construction together with the radial projection suffices to prove that simplices in spaces of constant curvature of any dimension can be subdivided with lower bounded quality.

\paragraph{Outline}
In Section~\ref{sec:preliminaries} we introduce the main tools for our theorem --- the notion of quality, the Freudenthal--Kuhn triangulation, and the radial projection. We also
explain Freudenthal's subdivision which is a key ingredient for our subdivision scheme.

In Section~\ref{sec:mainConstruction} we use these tools to establish a subdivision scheme for any simplex in a space of constant curvature that ensures a lower bound on the quality of the simplices within the subdivision. 

Finally, in Section \ref{sec:Comp} we compare our approach to the construction of Brunck~\cite{brunck1}, shedding light on how their construction can be extended from the considered two-dimensional simplices to simplices of arbitrary (finite) dimension. 

We conclude the article with several remarks and open questions (Section~\ref{sec:final_remarks}).

\section{Preliminaries}\label{sec:preliminaries}
The goal of this paper is to show that simplices in spaces of constant curvature can be subdivided while maintaining control over the quality of the simplices in the subdivision.

Here, we view a \emph{simplex} as a convex hull of a set of affinely independent points. 
A \emph{triangulation} of a (topological) space is a simplicial complex that is homeomorphic to that space.
A \emph{subdivision} of a triangulation is a refinement of a given triangulation obtained by subdividing its simplices into smaller simplices, without changing the underlying space.

Several notions of quality have been proposed for simplices.
In this article we rely on a common quality measure called fatness:
\begin{definition}\label{def:fatness}
    The \textbf{fatness} of a $d$-dimensional simplex $\Delta$ is its volume divided by the length of its longest edge $e_\Delta$ to the power $d$:
\[
\fatness(\Delta)=\frac{\vol(\Delta)}{\length(e_\Delta)^d}
\]
\end{definition}

We refer to for example \cite{CHOUDHARY201933, CoxeterQual} for a more general discussion on the quality of simplices in general Coxeter and related arrangements.

The proof of our main theorem is constructive and relies on the combination of the Freudenthal--Kuhn triangulation and the radial projection. We revise these notions next.

\paragraph{Freudenthal--Kuhn Triangulations} 
For historical reasons, the triangulation of the Euclidean space constructed by Freudenthal in~\cite{freudenthal1942simplizialzerlegungen} is referred to as the \emph{Freudenthal--Kuhn triangulation}\footnote{This triangulation and many closely related to it have a long history --- see the bibliography and some historical remarks in \cite{FullVersion, JournalTracing}. Furthermore, Freudenthal's construction is closely related to Coxeter triangulations \cite{FullVersion, JournalTracing, coxeter1934discrete,dobkin1990contour}. In fact, up to an affine transformation, it is a Coxeter triangulation of type $\tilde{A}_d$, where $d$ refers to the dimension.}. 
Roughly speaking, the Euclidean space is divided into unit cubes, and each cube is in turn subdivided into simplices using certain diagonals of the cube and its faces. See Figure~\ref{fig:triangulation}.

\begin{figure}[h!]
	\centering
		\includegraphics[width=.25\textwidth]{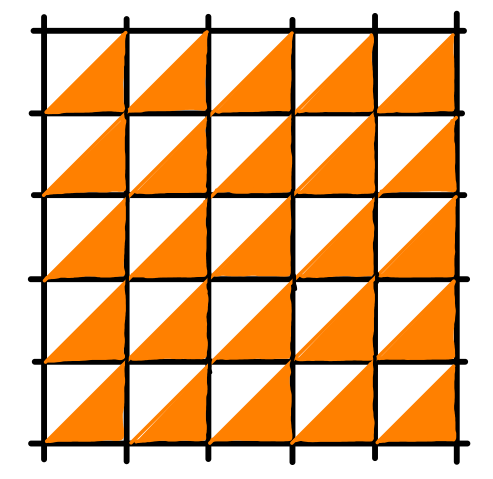}
	\caption{
		Freudenthal--Kuhn triangulation in dimension 2.
	}
	\label{fig:triangulation}
\end{figure}

One of the equivalent definitions is via the cells of a hyperplane arrangement.
To define it, let $e_1,\dots,e_d$ denote the standard unit vectors in $\mathbb{R}^d$, and consider the set $E_{FK}$ of all these vectors plus ``diagonals'':
\[E_{FK} = \{e_1,\ldots,e_d\} \cup \{u_{i,j} = e_j - e_i\ |\ 1 \leq i < j \leq d \}.\]
Then:
\begin{definition}
The \textbf{Freudenthal--Kuhn triangulation} is induced by the hyperplane arrangement
\[\mathcal{H}_{E_{FK}} =\{ x \in \mathbb{R}^d \mid \text{there exists }u\in E_{FK} \text{ s.t. }\langle x, u \rangle
\in \mathbb{Z} \}.\]
The image of the Freudenthal--Kuhn triangulation under an affine transformation (of Euclidean space) is called a \textbf{Coxeter--Freudenthal--Kuhn triangulation}. 
\end{definition} 

As suggested by its name, the cells of this hyperplane arrangement are simplices. 

One can also define the Freudenthal--Kuhn triangulation by subdividing $\mathbb{R}^d$ into unit cubes, and in turn subdivide each of these by considering the fractional parts of the coordinates of each point inside it and sorting them in an ascending order. The sorting yields a permutation of $\{ 1, \dots, d\}$\footnote{As an example, the fractional parts of the point $(1.2, 2.3)$ are $(0.2, 0.3)$. Since $0.2<0.3$, the coordinates are already sorted. Thus the point $(1.2, 2.3)$ will be assigned the identity permutation. The point $(1.3, 2.2)$ will be assigned the transposition.}; points with the same permutation are assigned to the same simplex.

In yet other words, each simplex in the unit cube contains the vertex at the `bottom left' corner of the unit cube (i.e., the vertex associated to the point $(0,\dots,0)$) and the vertices $e_{\sigma(1)}, e_{\sigma(1)}+e_{\sigma(2)},\dots,e_{\sigma(1)}+\dots+e_{\sigma(d)}$, for some permutation $\sigma\in S_d$.

As a consequence, each unit cube is subdivided into $d!$ simplices, all sharing the diagonal of the unit cube as an edge. 
{Moreover, any two simplices are congruent via a permutation matrix and a translation. Explicitly, the simplex
\begin{align*}
    T_{\tau} = \{0, e_{\tau(1)}, e_{\tau(1)}+e_{\tau(2)},\dots,e_{\tau(1)}+\dots+e_{\tau(d)}\}
\end{align*}
is the image of the simplex
\begin{align*}
    T_{\sigma} = \{0, e_{\sigma(1)}, e_{\sigma(1)}+e_{\sigma(2)},\dots,e_{\sigma(1)}+\dots+e_{\sigma(d)}\}
\end{align*}
under the permutation matrix related to the permutation $\tau \circ \sigma^{-1}$.}
We illustrate the subdivision of unit cubes in $\mathbb{R}^2$ and $\mathbb{R}^3$ in Figure~\ref{fig:fr_simplices}.

\begin{figure}[h!]
	\centering
	\begin{subfigure}[b]{0.35\textwidth}
		\centering
		\includegraphics[width=.8\textwidth]{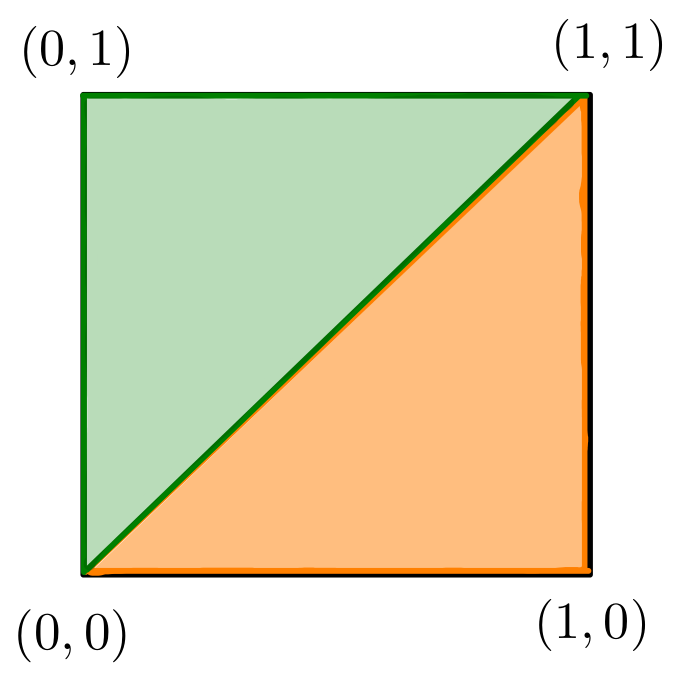}
	\end{subfigure}
	\hfill
	\begin{subfigure}[b]{0.5\textwidth}
		\centering
		\includegraphics[width=.8\textwidth]{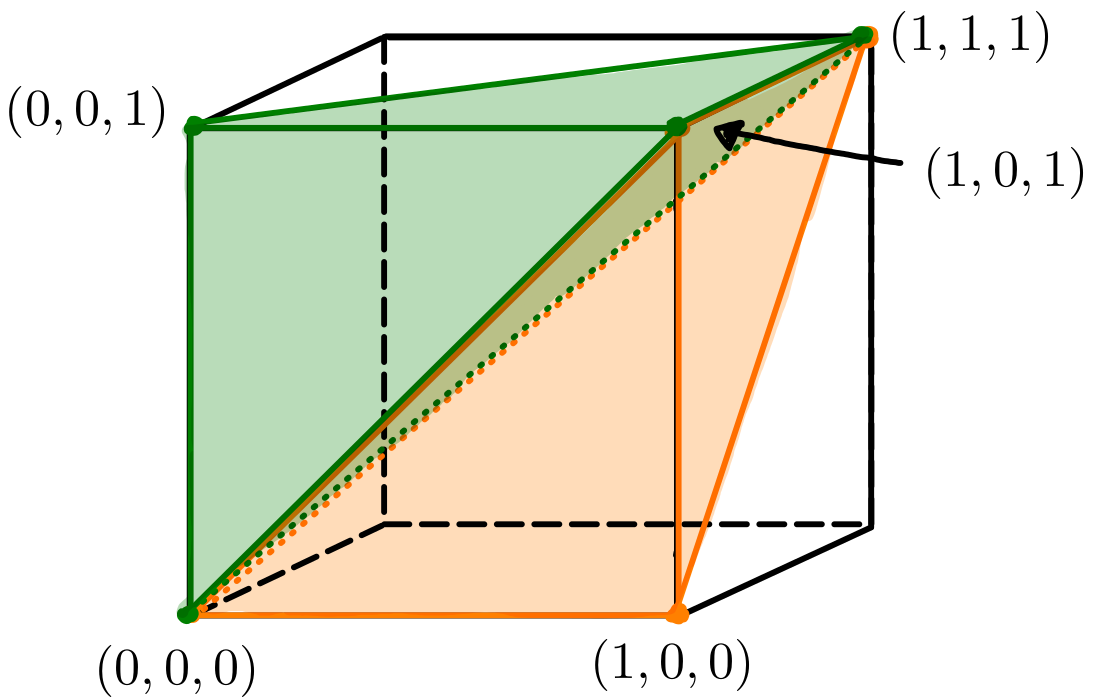}
	\end{subfigure}
	\hfill
	\caption{
		Simplices inside a unit cell in $\R^2$ (on the left) and $\R^3$ (on the right). On the right, we depict only two of the six tetrahedra: those corresponding to $x\geq z\geq y$ in orange, and $z\geq x\geq y$ in green.
	}
	\label{fig:fr_simplices}
\end{figure}

The Freudenthal--Kuhn triangulation can be rescaled by a factor $\tfrac{1}{k}$ with $k$ being an integer.  This rescaled triangulation induces a subdivision on each simplex of the original triangulation --- in dimension two we illustrate this in Figure~\ref{fig:subdivision}. We call this a \emph{subdivision by a factor of $k$}. This subdivision was also used in \cite{edelsbrunner1999edgewise}.

\begin{figure}[h!]
	\centering
		\includegraphics[width=.3\textwidth]{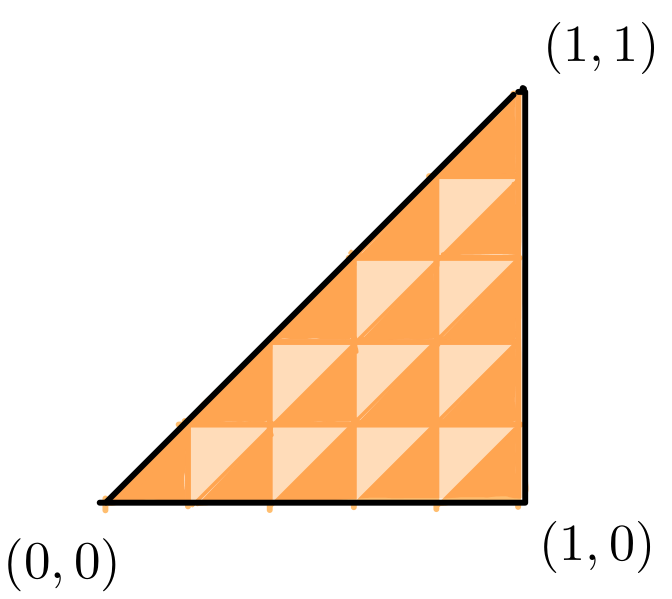}
	\caption{
		Subdivision by a factor of 5 of a simplex in the Freudenthal--Kuhn triangulation of $\R^2$.
	}
	\label{fig:subdivision}
\end{figure}

\begin{theorem}[\cite{freudenthal1942simplizialzerlegungen}]\label{thm:Freudenthal}
    Consider a subdivision by a factor $k$ of a simplex $\Delta$ in the Freudenthal--Kuhn triangulation. Then each simplex in the subdivision is similar to $\Delta$, related by a scaling factor of $\tfrac{1}{k}$. Thus, the fatness of simplices in the subdivision equals the fatness of $\Delta$. As a consequence, it is independent of $k$.
\end{theorem}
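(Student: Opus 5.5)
We prove the statement by showing that the rescaled triangulation, restricted to $\Delta$, is exactly a union of translated and scaled copies of simplices congruent to $\Delta$. Fatness is then constant for a purely combinatorial-geometric reason.

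\textbf{Reduction to a single model simplex.} By the remark following the definition, every simplex of the Freudenthal--Kuhn triangulation is the image of $T_{\mathrm{id}}$ under a permutation matrix composed with a translation. Both maps preserve the arrangement $\mathcal{H}_{E_{FK}}$: a permutation of coordinates maps $\{e_i\}$ to itself and maps $\{e_j-e_i\}$ to $\pm\{e_j-e_i\}$, and integer translations preserve integrality of $\langle x,u\rangle$. Likewise, these maps preserve the rescaled arrangement $\frac1k\mathcal{H}_{E_{FK}}$. Hence it suffices to treat $\Delta=T_{\mathrm{id}}=\{x: 1\ge x_1\ge x_2\ge\dots\ge x_d\ge 0\}$.

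\textbf{Every cell is a scaled copy.} Note that $\Delta$ is itself described by hyperplanes of the arrangement, so the rescaled arrangement refines it: each cell of $\frac1k\mathcal{H}_{E_{FK}}$ lies inside or outside $\Delta$. Each cell of the rescaled arrangement is $\frac1k$ times a cell of the original arrangement, since scaling by $\frac1k$ maps $\mathcal{H}_{E_{FK}}$ onto $\frac1k\mathcal{H}_{E_{FK}}$. Each original cell is of the form $P T_\sigma + z$ with $z\in\mathbb{Z}^d$. So every simplex of the subdivision is $\frac1k(PT_{\mathrm{id}}+z)$, which is similar to $\Delta$ with ratio $\frac1k$, because permutation matrices are orthogonal.

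\textbf{Fatness.} Under a similarity with ratio $\lambda=\frac1k$, volume scales by $\lambda^d$ and the longest edge scales by $\lambda$. The quotient in Definition~\ref{def:fatness} is therefore unchanged, and it is independent of $k$.

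\textbf{Main obstacle.} The only genuinely non-trivial point is the claim that the cells of $\mathcal{H}_{E_{FK}}$ are exactly the simplices $T_\sigma+z$. The paper asserts this but does not prove it, so I would verify it. Fix a point $x$ in general position and set $z=\lfloor x\rfloor$. The hyperplanes $x_i\in\mathbb{Z}$ confine $x$ to the open cube $z+(0,1)^d$. Inside this cube, the hyperplanes $x_j-x_i\in\mathbb{Z}$ reduce to $x_j=x_i$, since the differences lie in $(-1,1)$. The resulting cell is therefore $\{z+y: 1>y_{\sigma(1)}>\dots>y_{\sigma(d)}>0\}$. This is exactly the sorted-fractional-parts description, i.e.\ $T_\sigma+z$.
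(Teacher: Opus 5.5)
Your proof is correct and complete. The paper does not prove this theorem itself; it cites Freudenthal and records only the facts you rely on: all Freudenthal--Kuhn simplices are congruent via a permutation matrix and a translation, and the sorted-fractional-parts description. Your argument makes rigorous the reasoning the paper leaves implicit. It rests on three points: scale invariance of the arrangement; the refinement $\mathcal{H}_{E_{FK}}\subseteq\tfrac{1}{k}\mathcal{H}_{E_{FK}}$, which is the one place where integrality of $k$ enters, through $\mathbb{Z}\subseteq\tfrac{1}{k}\mathbb{Z}$; and your check that the cells are exactly the simplices $z+T_\sigma$.
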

Proving this statement (for $k =2^d$) was the main goal of Freudenthal's work in~\cite{freudenthal1942simplizialzerlegungen}.

\paragraph{Subdivision of a Euclidean Simplex}
The subdivision of the Freudenthal--Kuhn triangulation can be used to induce a subdivision of any Euclidean simplex using barycentric coordinates. We construct one such map explicitly. Consider a simplex $\Delta = \{v_0,\dots,v_d\}\subseteq \R^d$. Assume without loss of generality that $v_0=0$. Then $\Delta$ can be seen as the image of the simplex $T = \{0, e_1, e_1+e_2,\dots, e_1+\cdots+e_d\}$ under the linear transformation
\[
M:\R^d\to \R^d,\qquad e_i\mapsto v_i-v_{i-1}.
\]
This map satisfies $M(e_1+\cdots+e_i)=v_i$, and, for an integer factor $k\geq 2$, the rescaled map 
\[
\tfrac{1}{k}M:k T\to \Delta, \qquad x\mapsto \tfrac{1}{k}M(x),
\]
induces a triangulation on the simplex $\Delta$ as an image of the Freudenthal--Kuhn triangulation of the simplex $k T$.


{Unfortunately, the simplices in such a subdivision do not in general preserve the fatness of the original simplex.

\begin{lemma}~\label{lemma:example_fatness_loss}
    Let $\Delta\subseteq \R^3$ be the regular tetrahedron with edge length 1. 
    Let $M:\R^3\to\R^3$ be any affine transformation mapping one simplex of the Freudenthal--Kuhn triangulation onto $\Delta$. Consider a subdivision of $\Delta$ induced by $M$. Then there are always simplices in this subdivision that are of lower fatness than $\Delta$.  
\end{lemma}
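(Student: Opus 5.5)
The plan is to compare volumes and longest edges directly: every simplex of the subdivision has the same volume, while some of them have an edge longer than $1$. As in the construction above, $\Delta$ is the image of $T=\{0,e_1,e_1+e_2,e_1+e_2+e_3\}=\{x: 1\ge x_1\ge x_2\ge x_3\ge 0\}$ under $M$. By the remark on congruences, every Freudenthal--Kuhn simplex is $P(T)+t$ for a permutation matrix $P$ and an integer translation $t$. So an arbitrary affine $M$ taking some Freudenthal--Kuhn simplex onto $\Delta$ becomes, after precomposing with such a map, an affine map from $T$ onto $\Delta$. This map sends the vertices $0,e_1,e_1+e_2,e_1+e_2+e_3$ to the vertices of $\Delta$ in some order. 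Since the regular tetrahedron's isometry group realizes every permutation of its vertices, we may postcompose with an isometry (which preserves fatness). Translating, we may also assume $v_0=0$ and that $M$ is the linear map $e_i\mapsto v_i-v_{i-1}$.

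\textbf{Step 1 (equal volumes).} Let the integer factor be $k\ge2$ (for $k=1$ there is no genuine subdivision). Every simplex of the subdivision is the image under $\tfrac1k M$ of a simplex $p+T_\sigma$, with $p\in\mathbb{Z}^3$. Since $T_\sigma=P_\sigma T$ and $|\det P_\sigma|=1$, all these simplices have the same volume as $T$. Hence all images have volume $\vol(\Delta)/k^3$.

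\textbf{Step 2 (a long edge).} Write $a_i=v_i-v_{i-1}$. Every edge vector of $T$ has the form $e_i+\dots+e_j$, i.e. it is indexed by an interval of indices. Its image is therefore $v_j-v_{i-1}$, of length $1$. By contrast, the vector $e_1+e_3$ maps to $a_1+a_3=(v_1-v_0)+(v_3-v_2)$. In a regular tetrahedron opposite edges are orthogonal, so $|a_1+a_3|^2=1+1=2$. An image simplex containing the edge $e_1+e_3$ therefore has longest edge at least $\sqrt2/k$, compared to $1/k$ for the translates of $T$.

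\textbf{Step 3 (the bad simplex occurs).} We need a Freudenthal--Kuhn simplex inside $kT$ having such an edge. Take $p=e_1+e_2$ and $\sigma=(1,3,2)$. This gives the simplex with vertices $(1,1,0),(2,1,0),(2,1,1),(2,2,1)$. All four vertices satisfy $2\ge x_1\ge x_2\ge x_3\ge0$, so the simplex lies in $2T\subseteq kT$. It is a cell of the triangulation by the vertex description of the cells. Its edge from $(1,1,0)$ to $(2,1,1)$ is $e_1+e_3$. Combining the three steps, its image has the same volume as the other cells but a longest edge at least $\sqrt2$ times larger. Its fatness is therefore at most $2^{-3/2}\fatness(\Delta)<\fatness(\Delta)$.

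The main subtlety I expect is the reduction at the start. It must cover every affine $M$ and every vertex correspondence, and this relies on the vertex-transitivity (indeed full symmetric group) of the regular tetrahedron together with the congruence of all Freudenthal--Kuhn cells.
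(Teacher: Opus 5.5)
Your proof is correct and is essentially the paper's appendix argument. You normalize $M$ to the linear map $e_i\mapsto v_i-v_{i-1}$, use that all cells have volume $\vol(\Delta)/k^3$, and show that $e_1+e_3$ is stretched to length $\sqrt2$ via orthogonality of opposite edges instead of coordinates; this gives fatness at most $2^{-3/2}\fatness(\Delta)=\tfrac{1}{24}$, exactly the paper's value. Your Step~3 usefully makes explicit, via a bad cell inside $2T\subseteq kT$, what the paper leaves implicit by inspecting only the unit-cube cells. One point to tighten: your reduction needs $x\mapsto Px+t$ to be a symmetry of the whole rescaled Freudenthal--Kuhn triangulation, not merely a congruence between two cells, which holds since coordinate permutations permute $E_{FK}$ up to sign and integer translations preserve the arrangement.
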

There is a geometric reason for this. The regular simplex, in dimension strictly greater than $2$, does not tile the Euclidean space. Thus, there must be simplices in a subdivision (induced by the Freudenthal--Kuhn triangulation) of the regular simplex that are not regular.
At the same time, the regular simplex has the best possible fatness among simplices. Thus, the fatness of the non-regular simplices in the subdivision has to be lower. The proof of Lemma~\ref{lemma:example_fatness_loss} can be found in Appendix~\ref{sec:proof_fatness_loss}.


While we need to account for some loss of fatness when subdividing a Euclidean simplex using the Freudenthal--Kuhn triangulation, this loss can be easily controlled, and is independent of the subdivision factor.

Indeed, let $\Delta\subseteq\R^d$ be a Euclidean simplex subdivided by a factor of $k\geq 2$, and let $e_\Delta$ be its longest edge. Assuming without loss of generality that one of the vertices of $e_\Delta$ is the origin, let $M:\R^d\to \R^d$ be a linear map mapping the simplex $\{0, e_1,e_1+e_2,\dots,e_1+\cdots +e_d\}$ onto $\Delta$, such that the edge $e_\Delta$ is the image of the spacial diagonal $e_1+\cdots +e_d$.  
Recall that the map $M$ induces a subdivision by a factor of $k$ of the simplex $\Delta$ as the image of the Freudenthal--Kuhn triangulation of the simplex $\{0, ke_1,k(e_1+e_2),\dots,k(e_1+\cdots +e_d)\}$ under the map $\tfrac{1}{k}M$.

\begin{lemma}\label{lem:fatness_bounds}
    Let --- as above --- $\Delta\subseteq\R^d$ be a Euclidean simplex subdivided by a factor of $k\geq 2$, and let $e_\Delta$ be its longest edge. Then the volume of every simplex in the subdivision induced by the linear transformation $M$ equals $\frac{\vol(\Delta)}{k^d}$, and the length of its longest edge is upper bounded by $\tfrac{1}{k}||M||\:||M^{-1}||\length(e_\Delta)$. Thus, its fatness is lower bounded by  
    \begin{align*}
       \fatness(M(\Delta))\geq\frac{1}{||M||^d\:||M^{-1}||^d} \cdot\frac{\vol(\Delta)}{\length(e_\Delta)^d}.
    \end{align*}
    In particular, it is independent of the subdivision factor $k \geq 2$.
\end{lemma}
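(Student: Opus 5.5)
The plan is to transport everything back to the standard Freudenthal--Kuhn simplex $T=\{0,e_1,e_1+e_2,\dots,e_1+\cdots+e_d\}$ and use Theorem~\ref{thm:Freudenthal} there. By construction the subdivision of $\Delta$ is the image under $\tfrac1k M$ of the Freudenthal--Kuhn subdivision of $kT$. By Theorem~\ref{thm:Freudenthal}, each simplex $\sigma$ of that subdivision is congruent to $T$: it is the image of $T$ under a permutation matrix composed with a translation. Hence $\vol(\sigma)=\vol(T)$ and $\vol(kT)=k^d\vol(T)$. The corresponding simplex of $\Delta$ is $\tfrac1k M(\sigma)$. Since $\Delta = M(T)$, we get
\[
\vol\bigl(\tfrac1k M(\sigma)\bigr)=\frac{|\det M|\,\vol(T)}{k^d}=\frac{\vol(\Delta)}{k^d}.
\]
This settles the volume claim, and it needs nothing beyond linearity of $M$.

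For the edge bound, let $\tfrac1k M(p)-\tfrac1k M(q)$ be an edge of a subdivision simplex, where $p-q$ is an edge of $\sigma$. First, $\sigma$ is an isometric copy of $T$, so $p-q=P(a-b)$ for some edge $a-b$ of $T$ and some permutation matrix $P$. Second, every edge of $T$ has the form $e_{i+1}+\cdots+e_j$, so its Euclidean length is at most $\sqrt d=|e_1+\cdots+e_d|$, the length of the spacial diagonal. Third, by the choice of $M$, the diagonal is mapped to $e_\Delta$, so
\[
\sqrt d=|M^{-1}(e_\Delta)|\le \|M^{-1}\|\length(e_\Delta).
\]
Combining these,
\[
\Bigl|\tfrac1k M(p-q)\Bigr|\le \tfrac1k\|M\|\,|p-q|\le \tfrac1k\|M\|\sqrt d\le \tfrac1k\|M\|\,\|M^{-1}\|\length(e_\Delta).
\]

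Dividing the volume $\vol(\Delta)/k^d$ by the $d$-th power of this edge bound makes the factors $k^{-d}$ cancel. This gives the stated lower bound $\|M\|^{-d}\|M^{-1}\|^{-d}\vol(\Delta)/\length(e_\Delta)^d$, which is visibly independent of $k$. (The left-hand side written as $\fatness(M(\Delta))$ should be read as the fatness of any simplex of the subdivision.)

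The only step requiring care is the comparison between edges of $\sigma$ and the diagonal. One might worry that the subdivision simplices are not translates of $T$, but they are congruent to $T$ via permutation matrices, so edge lengths are preserved and the diagonal remains the longest edge. Everything else is operator-norm bookkeeping.
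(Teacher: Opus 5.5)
Your proposal is correct and follows essentially the same route as the paper. The volume identity comes from the congruence of every Freudenthal--Kuhn simplex to $T$ together with $\vol(\tfrac1k M(\sigma)) = |\det M|\vol(T)/k^d$. The edge bound comes from comparing the preimage of any subdivision edge with the rescaled cube diagonal $e_1+\cdots+e_d$ and then using $\sqrt d \le \|M^{-1}\|\length(e_\Delta)$. Your explicit listing of the edges of $T$ as $e_{i+1}+\cdots+e_j$, your use of $|\det M|$ in place of $\det M$, and your reading of $\fatness(M(\Delta))$ as the fatness of an arbitrary subdivision simplex are all sound small clarifications of the paper's argument.
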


The factor $||M||\:||M^{-1}||$ in the statement above is called the condition number of the map $M$. Roughly speaking, it measures how `skewed' $M$ is. The fact that the bound in the statement is independent of the subdivision factor $k \geq 2$ follows directly from Freudenthal's Theorem~\ref{thm:Freudenthal}.

\begin{proof}
Any simplex in the Freudenthal--Kuhn triangulation is congruent to the simplex $\{0, e_1,e_1+e_2,\dots,e_1+\cdots +e_d\}$, and thus the volume of the image of such a simplex under the map $\tfrac{1}{k}M$ equals
\[
\det \tfrac{1}{k}M\cdot\vol(\{0, e_1,e_1+e_2,\dots,e_1+\cdots +e_d\}) = \frac{\det M}{k^dd!} = \frac{\vol(\Delta)}{k^d}.
\]
Let $e_T$ be the longest edge of a simplex in the subdivision of $\Delta$. We use the fact that the edge $M^{-1}(e_T)$ is shorter or equal to the length of the diagonal of the cube with edge $\tfrac{1}{k}$. With that,
\begin{align*}
    \length(e_T)&\leq ||M||\cdot\length(M^{-1}(e_T))\\
    &\leq \tfrac{1}{k} ||M||\cdot ||e_1+\cdots +e_d|| \\
    &=\tfrac{1}{k} ||M||\cdot ||M^{-1} (M(e_1+\cdots +e_d))||\\
    &\leq \tfrac{1}{k} ||M||\: ||M^{-1}|| \length(e_\Delta).
\end{align*}
\end{proof}
}


\paragraph{Models for Simply Connected Spaces of Constant Curvature and Radial Projection} 

We use the following two models for simply connected spaces of constant non-zero curvature: For spaces of \emph{positive} constant curvature $\kappa>0$ we use the \emph{symmetrically embedded sphere} in $\mathbb{R}^{d+1}$:
\[\S_\kappa^d = \{x\in\R^{d+1}\mid \langle x,x\rangle=\tfrac{1}{\kappa}\},\]
where $\langle \cdot,\cdot\rangle$ indicates the Euclidean inner product.
For spaces of \emph{negative} constant curvature $\kappa<0$ we use the \emph{hyperboloid model} in the Minkowski space $\mathbb{R}^{d,1}$:
\[\H_\kappa^d = \{x\in\R^{d,1}\mid \langle x,x\rangle=\tfrac{1}{\kappa}\},\]
where in this case $\langle \cdot, \cdot\rangle$ indicates the Minkowski inner product $\langle x,y\rangle = \sum_{i=1}^d x_iy_i - x_{d+1}\:y_{d+1}$.
Both models are depicted in Figure~\ref{fig:radial_projection}.

\begin{figure}[h!]
	\centering
		\includegraphics[width=\textwidth]{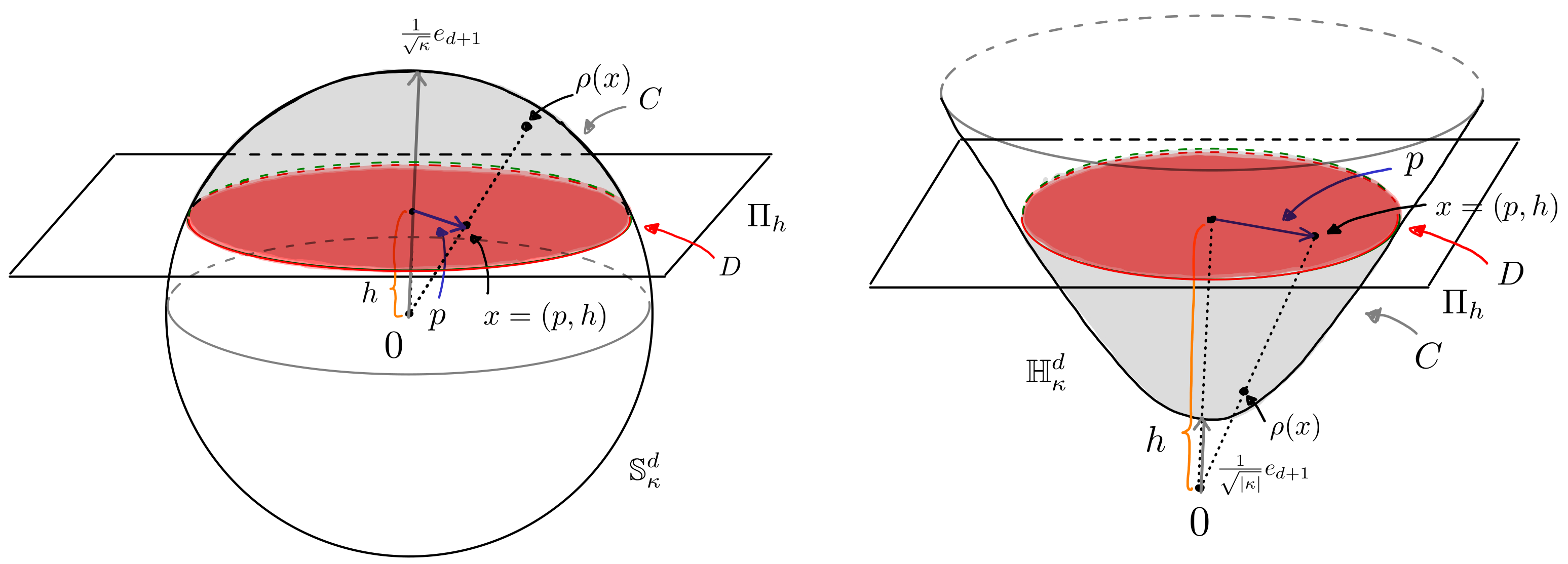}
	\caption{
		The two models, and the domain (in red) and target space (in gray) of the radial projection.
	}
	\label{fig:radial_projection}
\end{figure}

Let $\Pi_h$ be a hyperplane of $\mathbb{R}^{d+1}$ (resp. $\mathbb{R}^{d,1}$) perpendicular to the last unit vector $e_{d+1}$, that intersects $\S_k^d$ (resp. $\H_k^d$) at height $h$:
\begin{align}\label{eq:Pi_definition}
    \Pi_h = h\cdot e_{d+1} + \{x\in\R^{d+1} \mid \langle x, e_{d+1}\rangle = 0\}.
\end{align}

Let $D\subseteq \Pi_h$ be the disc whose boundary is the intersection of $\Pi_h$ with $\S_\kappa^d$ (resp.~$\H_\kappa^d$):

\[D= \Pi_h\cap \begin{cases} \{x\in \R^{d+1}\mid \langle x,x \rangle \leq \tfrac{1}{\kappa}\}& \text{if }\kappa>0,\\
\{x\in \R^{d,1}\mid \langle x,x \rangle \leq \tfrac{1}{\kappa}\}& \text{if }\kappa<0.
    \end{cases}
\]
It will sometimes be convenient to write the elements of $\Pi_h$ as
\[\Pi_h\ni x = (p,h), \qquad \text{with } p\in\R^d.\]
Points $x=(p,h)\in D$ can then be viewed as those with $||p||^2+h^2\leq \tfrac{1}{\kappa}$ if $\kappa>0$, or those with $||p||^2-h^2\leq \tfrac{1}{\kappa}$ if $\kappa<0$.

Let $C$ be the (intrinsically) convex cap in $\S_\kappa^d$ (resp. $\H_\kappa^d$) whose boundary is the intersection of $\Pi_h$ with $\S_\kappa^d$ (resp. $\H_\kappa^d$):
\[C=
\begin{cases} \S_\kappa^d\cap \bigcup_{t\geq h}\Pi_t,  &\text{if }\kappa>0,\\
\H_\kappa^d\cap \bigcup_{t\leq h}\Pi_t, & \text{if }\kappa<0.
    \end{cases}
\]

Then for each point $x\in D$, the ray through $x$ emanating from the origin intersects $C$ in a unique point, and vice versa. This bijection between $D$ and $C$ is called the \emph{radial projection}, and we denote it by $\rho$:
\begin{align}\label{eq:radial_projection}
    \rho:D\to C, \qquad x = (p,h)\mapsto \begin{cases}
        \tfrac{x}{\sqrt{\kappa(||p||^2+h^2)}}, & \text{if }\kappa>0,\\
        \tfrac{x}{\sqrt{\kappa(||p||^2-h^2)}}, & \text{if }\kappa<0.
    \end{cases}
\end{align}

Since, in both models, geodesics are the intersections of two-dimensional planes through the origin in the ambient space and the model, the radial projection of an edge in $D$ yields a geodesic edge in $C$.
Similarly, given a set $S\subseteq D$, the radial projection of its convex hull (in $D$) is the convex hull (in $C$) of the radial projection of $S$. Hence, the radial projection of a Euclidean simplex in $D$ yields a simplex in $C$.

\begin{lemma}
    Let $(g_{ij})$ be the Riemannian metric induced by the radial projection $\rho$. Then the volume distortion of $\rho$, $\vdis(\rho) = \sqrt{\det(g_{ij})}$, is bounded by
    \begin{align}\label{eq:bounds_rho}
        \begin{cases}
        \sqrt{\kappa}h \:dx_D  \leq \vdis(\rho) dx_C \leq \tfrac{1}{(\sqrt{\kappa} h)^d}dx_D, & \text{if }\kappa>0,\\
        \sqrt{|\kappa|}h \:dx_D \geq\vdis(\rho) dx_C \geq \tfrac{1}{(\sqrt{|\kappa|} h)^d}dx_D, & \text{if }\kappa<0,
    \end{cases}
    \end{align}
    where $dx_S$ denotes the volume form on the space $S$.
    
     The metric distortion of $\rho$ is upper bounded by
      \begin{align}
        \begin{cases}\label{eq:bounds_metric}
        \tfrac{\sqrt{2}}{\sqrt{\kappa} h}, & \text{if }\kappa>0,\\
        \sqrt{|\kappa|} h, & \text{if }\kappa<0.
    \end{cases}
    \end{align}
\end{lemma}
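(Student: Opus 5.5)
The plan is to compute the differential of $\rho$ explicitly at a point $x=(p,h)\in D$ and read off its singular values. Both bounds then follow from elementary estimates on the (pseudo-)norm of $x$. Throughout I write $r$ for the Euclidean norm $\sqrt{||p||^2+h^2}$ when $\kappa>0$, and for $\sqrt{h^2-||p||^2}$ when $\kappa<0$. With this notation $\rho(x)=x/(\sqrt{|\kappa|}\,r)$ in both cases, by \eqref{eq:radial_projection}.

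\emph{Positive curvature.} Tangent vectors to $D$ are horizontal vectors $v$, i.e.\ $\langle v,e_{d+1}\rangle=0$. Differentiating gives
\[
d\rho_x(v)=\tfrac{1}{\sqrt{\kappa}\,r}\Bigl(v-\tfrac{\langle x,v\rangle}{r^2}\,x\Bigr).
\]
I split $T_xD$ orthogonally into the line spanned by $p$ and its $(d-1)$-dimensional complement.
\begin{itemize}[label={}]
\item For $v\perp p$ we have $\langle x,v\rangle=0$, so $v$ is stretched by the factor $1/(\sqrt{\kappa}\,r)$.
\item For $v=p/||p||$ we have $\langle x,v\rangle=||p||$. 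Then $\bigl|v-\tfrac{||p||}{r^2}x\bigr|^2=1-\tfrac{||p||^2}{r^2}=\tfrac{h^2}{r^2}$, so the stretch factor is $h/(\sqrt{\kappa}\,r^2)$.
\end{itemize}
The images of these two subspaces remain orthogonal, since both are orthogonal to $x$ and to each other. Hence the singular values are $1/(\sqrt{\kappa} r)$ with multiplicity $d-1$, together with $h/(\sqrt{\kappa} r^2)$. This gives
\[
\vdis(\rho)=\frac{h}{\kappa^{d/2}r^{d+1}}=\frac{h}{r}\,(\sqrt{\kappa}\,r)^{-d}.
\]
Since $x\in D$, we have $h\le r\le 1/\sqrt{\kappa}$. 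Therefore $h/r\in[\sqrt{\kappa}h,1]$ and $(\sqrt{\kappa}r)^{-d}\in[1,(\sqrt{\kappa}h)^{-d}]$. Multiplying the extreme values yields \eqref{eq:bounds_rho}.

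The largest singular value is at most $1/(\sqrt{\kappa}\,h)$. This bounds the infinitesimal stretch, and integrating along the segment between two points of the convex set $D$ gives a length bound for the geodesic distance in $C$. The extra factor $\sqrt2$ in \eqref{eq:bounds_metric} leaves room for whichever notion of distance (chordal versus intrinsic) is used later, so I would simply note that our bound is stronger.

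\emph{Negative curvature.} The same computation goes through with the Minkowski product. Now $\langle x,x\rangle=-r^2$, and $d\rho_x(v)=\tfrac{1}{\sqrt{|\kappa|}r}\bigl(v+\tfrac{\langle x,v\rangle}{r^2}x\bigr)$.
\begin{itemize}[label={}]
\item For $v\perp p$ the stretch factor is $1/(\sqrt{|\kappa|}r)$.
\item For $v=p/||p||$ the Minkowski square norm of the image is $1+2\tfrac{||p||^2}{r^2}-\tfrac{||p||^2}{r^2}=\tfrac{h^2}{r^2}$, so the stretch factor is $h/(\sqrt{|\kappa|}r^2)$.
\end{itemize}
Hence $\vdis(\rho)=\frac{h}{r}(\sqrt{|\kappa|}r)^{-d}$ as before. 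The difference is that now $1/\sqrt{|\kappa|}\le r\le h$. So $h/r\in[1,\sqrt{|\kappa|}h]$ and $(\sqrt{|\kappa|}r)^{-d}\in[(\sqrt{|\kappa|}h)^{-d},1]$, which gives the reversed inequalities in \eqref{eq:bounds_rho}. The largest singular value is $h/(\sqrt{|\kappa|}r^2)\le \sqrt{|\kappa|}h$, since $r^2\ge 1/|\kappa|$; this is \eqref{eq:bounds_metric}.

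\emph{Main obstacle.} I expect the main point of care to be the hyperbolic case. One must check that the images of the two tangent subspaces are Minkowski-orthogonal and spacelike, so that the induced metric is indeed diagonal with the stated entries. One must also keep track of the correct range of $r$, which reverses all the inequalities relative to the spherical case.
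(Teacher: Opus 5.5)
Your proof is correct. For the volume distortion it is the same computation as the paper's, done on $d\rho_x$ in an adapted orthonormal frame rather than on the Gram matrix. The paper writes $(g_{ij})=\frac{1}{\kappa\alpha^2}(\alpha I-pp^{T})$ with $\alpha=\|p\|^2+h^2$ (and the Minkowski analogue), and uses $\det(\alpha I-pp^{T})=\alpha^{d-1}(\alpha-\|p\|^2)$. That identity is exactly your splitting of $T_xD$ into $p^\perp$ and the line through $p$. Both routes give $\vdis(\rho)=h/(|\kappa|^{d/2}r^{d+1})$ and finish by monotonicity in $r$ over the same range.

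The genuine difference is in the metric distortion. The paper bounds the largest entry $\max_{i,j}|g_{ij}|$, using the triangle inequality on the two terms $\alpha\delta_{ij}$ and $p_ip_j$ together with $|p_ip_j|\le\|p\|^2$. That step is where the factor $\sqrt2$ for $\kappa>0$ comes from; it has nothing to do with chordal versus intrinsic distance, as you guessed. You instead use the largest singular value, i.e.\ the top eigenvalue of $(g_{ij})$. This gives the sharper constant $1/(\sqrt{\kappa}h)$ for $\kappa>0$, and exactly the stated $\sqrt{|\kappa|}h$ for $\kappa<0$.

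Your route is also the cleaner justification. An entrywise bound on $(g_{ij})$ controls $v^{T}(g_{ij})v$ only up to a factor $\|v\|_1^2/\|v\|_2^2\le d$ in general, whereas the operator norm controls it exactly. Your integration along a segment of the convex set $D$ also makes explicit how the infinitesimal bound becomes the edge-length bound used later in the main theorem.
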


\begin{corollary}
    Let $(\tilde{g}_{ij})$ be the Riemannian metric induced by the inverse map of the radial projection $\rho^{-1}$. Then the volume distortion $\vdis(\rho^{-1})=\sqrt{\det(\tilde{g}_{ij})}$ of $\rho^{-1}$ is bounded by
    \begin{align}\label{eq:bounds_rho_inverse}
        \begin{cases}
        (\sqrt{\kappa}h)^d \:dx_C  \leq \vdis(\rho^{-1}) dx_D \leq \tfrac{1}{\sqrt{\kappa} h}dx_C, & \text{if }\kappa>0,\\
        (\sqrt{|\kappa|}h)^d \:dx_C \geq\vdis(\rho^{-1}) dx_D \geq \tfrac{1}{\sqrt{|\kappa|} h}dx_C, & \text{if }\kappa<0.
    \end{cases}
    \end{align}
\end{corollary}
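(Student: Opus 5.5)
The plan is to derive the corollary directly from the preceding lemma, using that $\rho^{-1}$ is the inverse of a diffeomorphism. At corresponding points $x\in D$ and $\rho(x)\in C$, the volume forms are related by $\rho^* dx_C = \vdis(\rho)\, dx_D$. Since $\rho^{-1}\circ\rho=\mathrm{id}$, the chain rule for Jacobians gives
\[
\vdis(\rho^{-1})(\rho(x))\cdot \vdis(\rho)(x) = 1 .
\]
Equivalently, the Gram matrix $(\tilde g_{ij})$ of $\rho^{-1}$ is the inverse of the Gram matrix $(g_{ij})$ of $\rho$ when both are written in corresponding coordinates. The first step is to state this reciprocity precisely, interpreting the inequalities in \eqref{eq:bounds_rho} as bounds on the density of one volume form with respect to the other.

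The second step is to invert the two-sided bounds. For $\kappa>0$, \eqref{eq:bounds_rho} says that the Jacobian factor $J=\vdis(\rho)$ satisfies $\sqrt{\kappa}h\le J\le (\sqrt{\kappa}h)^{-d}$, with $0<\sqrt{\kappa}h\le 1$. Taking reciprocals reverses the inequalities:
\[
(\sqrt{\kappa}h)^d\le J^{-1}\le (\sqrt{\kappa}h)^{-1}.
\]
This is exactly the first line of \eqref{eq:bounds_rho_inverse}. For $\kappa<0$ we have $\sqrt{|\kappa|}h\ge 1$, since $h\ge 1/\sqrt{|\kappa|}$ on the hyperboloid. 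In that case the lemma gives $(\sqrt{|\kappa|}h)^{-d}\le J\le \sqrt{|\kappa|}h$. Reciprocating yields $(\sqrt{|\kappa|}h)^{-1}\le J^{-1}\le (\sqrt{|\kappa|}h)^d$, which matches the stated chain once it is read in the same orientation as the lemma.

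The main obstacle is bookkeeping rather than mathematics. We must make sure that the ``$dx_D$'' and ``$dx_C$'' in the lemma's notation are interpreted consistently, so that swapping the roles of $D$ and $C$ corresponds precisely to taking the reciprocal of the density. We must also check the signs of the quantities involved, namely $\sqrt{\kappa}h\in(0,1]$ and $\sqrt{|\kappa|}h\ge 1$, so that reciprocation legitimately reverses each inequality. Once the reciprocity identity from the first step is established, no further computation is needed.
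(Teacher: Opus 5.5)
Your argument is correct and is exactly the reasoning the paper leaves implicit. The corollary carries no separate proof because it follows from the lemma through the pointwise identity $\vdis(\rho^{-1})(\rho(x))\cdot\vdis(\rho)(x)=1$, and taking reciprocals of the positive bounds in \eqref{eq:bounds_rho} then yields \eqref{eq:bounds_rho_inverse} for both signs of $\kappa$; only this determinant reciprocity and the positivity of the bounds are used, so your side remarks about the Gram matrices being literal inverses and about comparing $\sqrt{|\kappa|}h$ with $1$ are not needed.
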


\begin{remark}
    The number $\tfrac{1}{\sqrt{|\kappa|}}$ is precisely the (Euclidean) distance from the origin to any point at the boundary of the disc $D$. If one considers the cone through $D$ with apex at the origin (in other words, the conical hull of $D$), the expression $\sqrt{|\kappa|}h = \tfrac{h}{\tfrac{1}{\sqrt{|\kappa|}}}$ equals the cosine of the cone angle of this cone.
\end{remark}

\begin{proof}
    To derive bounds on the metric and volume distortion of the radial projection, we use standard techniques from differential geometry.

 The proof consists of three parts. In Part One, we calculate the coefficients of the Riemannian metric in the coordinate system on $C$ induced by $\rho$. In Part Two, we use these coefficients to determine the bounds on the volume distortion. Finally, in Part Three we use the results from Part One to determine the metric bounds.

\paragraph{Part One:} Choose a point $x=(p,h)\in D$. We denote the $i$-th coordinate vector of $\R^{d+1}$ by $e_i$ and the $i$-th coordinate of $p$ by $p_i$. Then, for $i=1,\dots,d$,
\begin{align*}
    \partial_i \rho(x)=\tfrac{\partial\rho}{\partial p_i}(p,h) = \begin{cases}
        \tfrac{1}{\sqrt{\kappa}(||p||^2 + h^2)^{3/2}}[(||p||^2 + h^2) e_i -p_i (p,h)], & \text{if }\kappa>0,\\
        \tfrac{-1}{\sqrt{|\kappa|}(-||p||^2 + h^2)^{3/2}}[(||p||^2 - h^2) e_i - p_i (p,h)], & \text{if }\kappa<0.
    \end{cases}
\end{align*}

The Riemannian metric at a point $x=(p,h)$ then equals
\[g_{ij} = \langle \partial_i \rho(x),\partial_j \rho(x) \rangle =\begin{cases}
    \tfrac{1}{\kappa(||p||^2 + h^2)^2}[(||p||^2 + h^2)\delta_{ij} - p_i p_j]& \text{if }\kappa>0,\\
    \tfrac{-1}{|\kappa|(-||p||^2 + h^2)^2}[(||p||^2 - h^2)\delta_{ij} - p_i p_j]& \text{if }\kappa<0.
\end{cases}\]

\paragraph{Part Two:} The volume element of the Riemannian metric is given by $\sqrt{\det(g_{ij})} dp_1 \wedge \dots \wedge dp_d$, meaning that the volume distortion $\vdis(\rho)$ is governed by the determinant $\det(g_{ij})$. 

Let us first consider the case $\kappa>0$. By setting $\alpha = ||p||^2 + h^2$, we obtain
\begin{align*}
    (g_{ij})=\tfrac{1}{\kappa\alpha^2}(\alpha \cdot I - pp^T).
\end{align*}
With that,
\begin{align*}
    \det(g_{ij}) &= \tfrac{1}{\kappa^d\alpha^{2d}}\underbrace{\det(\alpha \cdot I - pp^T)}_{=\alpha^{d-1}(\alpha - ||p||^2)}=\frac{h^2}{\kappa^d \alpha^{d+1}}= \frac{1}{\kappa^d \cdot (||p||^2 + h^2)^d}\cdot\frac{h^2}{||p||^2 + h^2}.
\end{align*}

Because $x=(p,h)$ lies in $D$, we know that $h^2\leq ||p||^2+h^2 \leq \tfrac{1}{\kappa}$. 
Thus, $\vdis(\rho)$ satisfies:
\begin{align*}
     \sqrt{\kappa}h \leq \vdis(\rho) \leq \tfrac{1}{(\sqrt{\kappa} h)^d},
\end{align*}
which is precisely the first part of Eq.~\ref{eq:bounds_rho}.

If $\kappa<0$, we substitute $\alpha = ||p||^2 - h^2$ and follow the same calculation to obtain
\begin{align*}
    \det(g_{ij}) &=\frac{1}{\kappa^d \cdot (||p||^2 - h^2)^d}\cdot\frac{-h^2}{||p||^2 - h^2}\\
    &=\frac{1}{|\kappa|^d \cdot (-||p||^2 + h^2)^d}\cdot\frac{h^2}{-||p||^2 + h^2}.
\end{align*}

Similarly, because $x=(p,h)$ lies in $D$, we know that $h^2\geq-||p||^2+h^2\geq \tfrac{1}{|\kappa|}$. 
Thus, $\vdis(\rho)$ satisfies:
\begin{align*}
     \sqrt{|\kappa|}h \geq \vdis(\rho) \geq \tfrac{1}{(\sqrt{|\kappa|} h)^d},
\end{align*}
which is the second part of Eq.~\ref{eq:bounds_rho}.

\paragraph{Part Three:} 
We will use the same inequalities to upper bound the metric distortion of the radial projection $\rho$.
The infinitesimal line element of the Riemannian metric induced by the radial projection is given by $\sqrt {\sum_{ij} g_{ij} dp_i dp_j }$. This can be estimated as follows:
\begin{align*}
    \sum_{ij} g_{ij} dp_i dp_j
    \leq \Big |\sum_{ij} g_{ij} dp_i dp_j \Big |
    \leq \sum_{ij} |g_{ij}| dp_i dp_j 
    \leq \max_{i,j}|g_{ij}|\sum_{ij} dp_i dp_j.
\end{align*}

Let $\kappa>0$. Then:
\begin{align*}
    \max_{i,j}|g_{ij}(p,h)|&\leq \tfrac{1}{\kappa(||p||^2 + h^2)^2}[||p||^2 + h^2 + \max_{i,j}|p_i p_j|]\leq \tfrac{1}{\kappa(||p||^2 + h^2)^2} [2||p||^2 + h^2]\\
    &\leq \tfrac{1}{\kappa(||p||^2 + h^2)^2} 2[||p||^2 + h^2]=\tfrac{2}{\kappa(||p||^2 + h^2)}\\
    &\leq \tfrac{2}{\kappa h^2},
\end{align*}
which is precisely the first part of Eq.~\ref{eq:bounds_metric}.

For $\kappa<0$,
\begin{align*}
    \max_{i,j}|g_{ij}(p,h)|&\leq \tfrac{1}{|\kappa|(-||p||^2 + h^2)^2}[-||p||^2 + h^2 + \max_{i,j}|p_i p_j|]\\
    &\leq \tfrac{1}{|\kappa|(-||p||^2 + h^2)^2}[-||p||^2 + h^2 + ||p||^2]=\tfrac{h^2}{|\kappa|(-||p||^2 + h^2)^2}\\
    &\leq |\kappa| h^2,
\end{align*}
which is the second part of Eq.~\ref{eq:bounds_metric}.
\end{proof}

\section{Subdivision of Simplices of Constant Curvature} \label{sec:mainConstruction}
We are now ready to prove our claim:
\begin{theorem}\label{thm:main}
Any simplex $\Delta$ of constant non-zero curvature, whose circumradius is either finite (if the curvature is negative) or is less than the circumradius of the model sphere (if the curvature is positive), has a subdivision of simplices whose fatness is lower bounded in terms of the fatness and circumradius of $\Delta$.
\end{theorem}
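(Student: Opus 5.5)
The plan is to transport the problem to a flat simplex via the radial projection $\rho$, subdivide there using Freudenthal's scheme (Lemma~\ref{lem:fatness_bounds}), and project the subdivision back, controlling the distortion with the bounds \eqref{eq:bounds_rho}, \eqref{eq:bounds_metric} and \eqref{eq:bounds_rho_inverse}.

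\textbf{Step 1: set up the projection.} By homogeneity, apply an isometry of the model so that the circumcenter of $\Delta$ lies on the $e_{d+1}$-axis. Let $R$ be the circumradius of $\Delta$. Choose the height $h$ so that the cap $C$ has intrinsic radius exactly $R$. Then all vertices of $\Delta$ lie on $\partial C$, and by convexity $\Delta\subseteq C$.

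This is where the hypotheses enter.
\begin{itemize}
\item For $\kappa>0$ we need $R<\tfrac{\pi}{2\sqrt{\kappa}}$, so that $h=\cos(\sqrt{\kappa}R)/\sqrt{\kappa}>0$ and the cap is a proper hemisphere-interior. This is how I read ``less than the circumradius of the model sphere''. If the statement is meant more liberally, one can first cut $\Delta$ into finitely many pieces.
\item For $\kappa<0$ we get $h=\cosh(\sqrt{|\kappa|}R)/\sqrt{|\kappa|}$, which is finite.
\end{itemize}
In both cases the quantity $\sqrt{|\kappa|}h$ is a function of $R$ alone, namely $\cos(\sqrt\kappa R)$ or $\cosh(\sqrt{|\kappa|}R)$. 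Set $\Delta_E=\rho^{-1}(\Delta)$. Since $\rho$ maps Euclidean simplices in $D$ to geodesic simplices, $\Delta_E$ is a Euclidean simplex in $D\subseteq\Pi_h$, with vertices the preimages of those of $\Delta$.

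\textbf{Step 2: compare fatness of $\Delta$ and $\Delta_E$, and subdivide.} The volume distortion bound \eqref{eq:bounds_rho_inverse} compares $\vol(\Delta_E)$ with $\vol(\Delta)$ up to a factor depending only on $\sqrt{|\kappa|}h$. The metric distortion bounds (Eq.~\ref{eq:bounds_metric} and its inverse) compare edge lengths. Note that straight segments in $D$ map to geodesics, so lengths are comparable by integrating the line element. This yields $\fatness(\Delta_E)\geq c(R)\fatness(\Delta)$.

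Now subdivide $\Delta_E$ by a factor $k$ (any $k$, e.g.\ large) with the map $M$ of Lemma~\ref{lem:fatness_bounds}. Each piece has volume $\vol(\Delta_E)/k^d$ and longest edge at most $\tfrac1k\|M\|\|M^{-1}\|\length(e_{\Delta_E})$. Finally, project each piece back by $\rho$; the images are geodesic simplices subdividing $\Delta$. Applying \eqref{eq:bounds_rho} and \eqref{eq:bounds_metric} once more bounds their fatness below by $c'(R)\,\fatness(\Delta_E)/(\|M\|\|M^{-1}\|)^d$.

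\textbf{Main obstacle: the condition number.} The hard part is that the factor $\|M\|\|M^{-1}\|$ must be bounded in terms of $\fatness(\Delta)$ and $R$ only. I would fix the normalization of $M$ as in Lemma~\ref{lem:fatness_bounds}, with the longest edge mapped to the diagonal $e_1+\cdots+e_d$, and argue as follows.
\begin{itemize}
\item $\|M\|$ is at most $\sqrt d$ times the longest edge, since the columns of $M$ are edge vectors $v_i-v_{i-1}$.
\item $\|M^{-1}\|=1/\sigma_{\min}(M)$, and $\sigma_{\min}$ is controlled from below via
\[
\sigma_{\min}\geq \frac{|\det M|}{\|M\|^{d-1}} = \frac{d!\,\vol(\Delta_E)}{\|M\|^{d-1}}.
\]
\end{itemize}
Combining the two gives $\|M\|\|M^{-1}\|\leq C_d/\fatness(\Delta_E)$. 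The resulting bound is a power of $\fatness(\Delta)$ times a function of $R$, independent of $k$, which is exactly what Theorem~\ref{thm:main} asks for.

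Some care is needed with the metric distortion in the other direction, i.e.\ a lower bound on $\rho$'s stretching to bound $\length(e_{\Delta_E})$ by $\length(e_\Delta)$. I expect this to follow from the inverse bounds, analogously to the corollary.
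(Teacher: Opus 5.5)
Your proposal is correct and follows the paper's proof. You place the vertices in $\Pi_h$ with $\sqrt{|\kappa|}h=\cos(\sqrt{\kappa}R)$ resp.\ $\cosh(\sqrt{|\kappa|}R)$, subdivide the Euclidean simplex $\rho^{-1}(\Delta)$ by Freudenthal's scheme via Lemma~\ref{lem:fatness_bounds}, and project back. Volumes and edge lengths are then controlled by \eqref{eq:bounds_rho}, \eqref{eq:bounds_rho_inverse} and \eqref{eq:bounds_metric}.

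The genuine addition is your estimate $\|M\|\,\|M^{-1}\|\le d^{d/2}/(d!\,\fatness(\Delta_E))$. It is correct, and the paper never makes it: its constant $\mathscr{C}$ in \eqref{eq:bounds_fatness} keeps the factor $\|M\|^d\|M^{-1}\|^d$. So it is your step that makes the bound depend only on fatness and circumradius, as the statement claims. You do not even need $\|M^{-1}\|$. Insert your bound $\|M\|\le\sqrt{d}\,\length(e_{\Delta_E})$ directly into the edge estimate of the proof of Lemma~\ref{lem:fatness_bounds}. This gives $\length(e_T)\le\frac{d}{k}\length(e_{\Delta_E})$, hence pieces of fatness at least $\fatness(\Delta_E)/d^d$. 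That is linear in the fatness rather than of order $d+1$.

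The step you leave as an expectation, bounding $\length(e_{\Delta_E})$ by $\length(e_\Delta)$, does not follow from the Corollary, which only concerns volume. For $\kappa>0$ use, as the paper does, that a chord is shorter than its arc. For $\kappa<0$ the comparison goes the other way. An edge of hyperbolic length $\ell\le 2R$ has Euclidean chord $\frac{2}{\sqrt{|\kappa|}}\sinh(\sqrt{|\kappa|}\ell/2)\le\cosh(\sqrt{|\kappa|}R)\,\ell$. Equivalently, the smallest eigenvalue of $(g_{ij})$ from Part One is at least $1/(|\kappa|h^2)$. So an extra factor $\sqrt{|\kappa|}h$ enters. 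The paper's own $\kappa<0$ edge estimate silently omits this factor; it only changes the power of $\cosh(\sqrt{|\kappa|}r)$ in $\mathscr{C}$, not the conclusion.
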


\begin{remark}\label{remark:circumradius_dependence}
    The dependence of the bound on the circumradius of the simplex is natural. Indeed, when the circumradius of a simplex is infinite in negatively curved ambient spaces, or equal to that of the model sphere in positively curved ambient spaces, the radial projection defined in Eq.~\eqref{eq:radial_projection} is no longer well defined. Accordingly, in these cases the bounds in Eq.~\eqref{eq:bounds_fatness} degenerate.

    When the curvature of the ambient space is positive, and the circumradius of a simplex --- seen as a collection of points --- equals the radius of the model sphere, the convex hull of these points is not well-defined.
    
    When the curvature of the ambient space is negative, it might happen that the simplex is well-defined and finite, while its circumradius is infinite. In this case, our approach fails to provide appropriate bounds on the simplex quality. We suggest an alternative approach to lower-bound the quality in Section~\ref{sec:final_remarks}.
\end{remark}

\begin{proof}
Let $d$ be the dimension of $\Delta$, $\kappa\neq 0$ its curvature, and $r$ its circumradius.
Further, let 
\begin{align}\label{eq:h}
    h = \begin{cases}
    \tfrac{1}{\sqrt{\kappa}}\cos(\sqrt{\kappa}r) & \text{if }\kappa>0,\\
    \tfrac{1}{\sqrt{|\kappa|}}\cosh(\sqrt{|\kappa|}r) & \text{if }\kappa<0.
\end{cases}
\end{align}
Then $\Delta$ can be embedded in the corresponding model of space of constant curvature such that its vertices lie in the hyperplane $\Pi_h$ as defined in~\eqref{eq:Pi_definition}. We subdivide $\Delta$ as illustrated in Figure~\ref{fig:fr_subdivision}:

The hyperplane $\Pi_h$ inherits a Euclidean metric, both when its ambient space is $\R^{d+1}$ and when it is $\R^{d,1}$. Thus, the convex hull of the vertices of $\Delta$ in $\Pi_h$ yields a Euclidean simplex. This simplex can be subdivided by a factor $k$ according to Freudenthal's scheme (in red in Figure~\ref{fig:fr_subdivision}). Finally, the radial projection of this subdivision onto the space of constant curvature yields a subdivision of $\Delta$ (in green in Figure~\ref{fig:fr_subdivision}).
\begin{figure}[h!]
	\centering
	\begin{subfigure}[b]{0.4\textwidth}
		\centering
		\includegraphics[width=.9\textwidth]{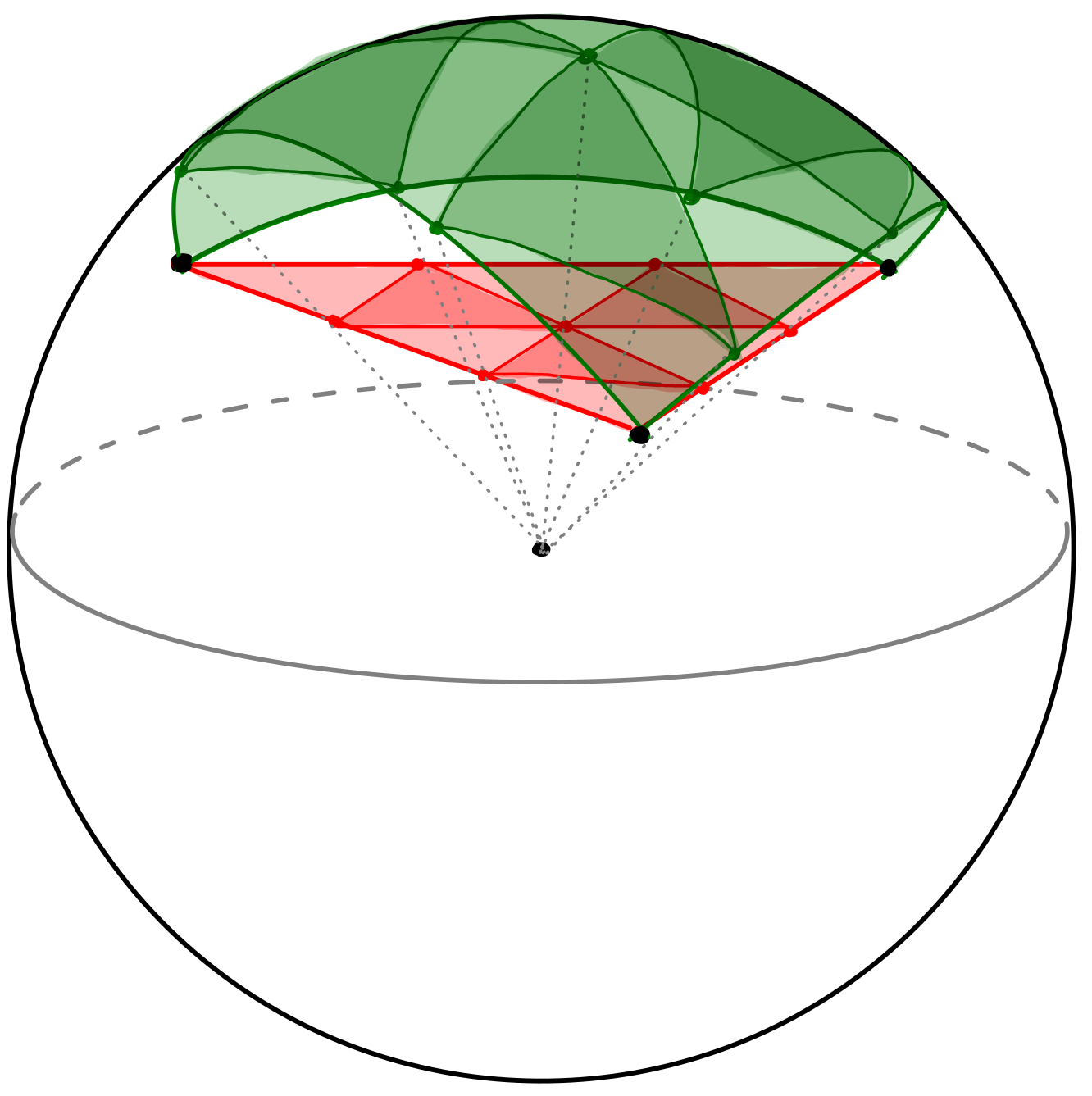}
	\end{subfigure}
	\qquad\quad
	\begin{subfigure}[b]{0.4\textwidth}
		\centering
		\includegraphics[width=.9\textwidth]{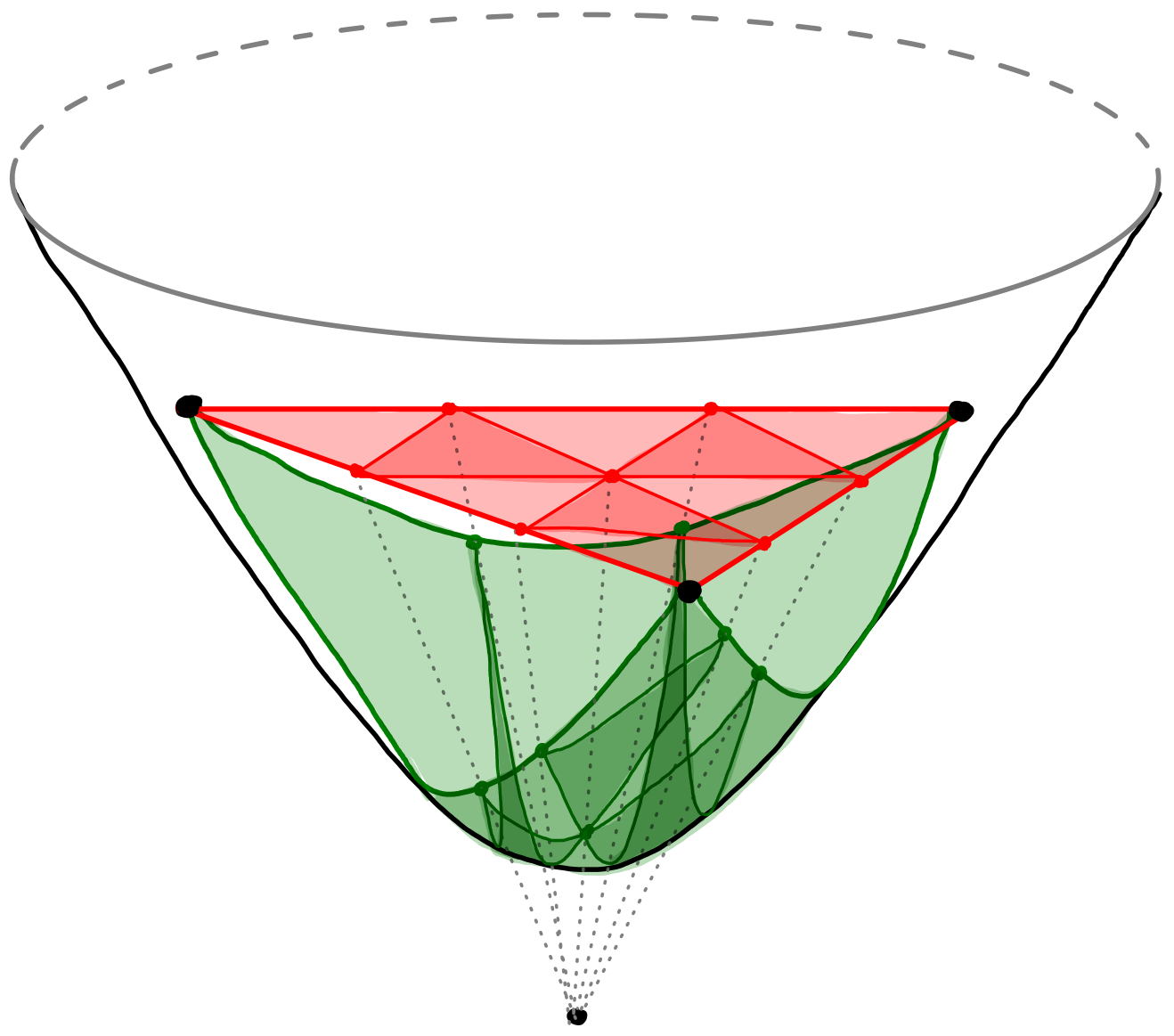}
	\end{subfigure}
	\hfill
	\caption{
		Radial projection of Freudenthal's scheme for $d=2$ and $k=3$. The simplex $\Delta$ is in green, the corresponding Euclidean simplex in red.
	}
	\label{fig:fr_subdivision}
\end{figure}

Let $M$ be a map inducing the subdivision of the Euclidean simplex $\rho^{-1}(\Delta)$, and let $T$ be a simplex in the image of this subdivision, that is, in the subdivision of $\Delta$. We first consider the case where $\kappa>0$:
\begin{enumerate}
    \item Due to the bound on the volume distortion of $\rho^{-1}$ in Eq.~\eqref{eq:bounds_rho_inverse}, $\vol(T)\geq \sqrt{\kappa}h\vol(\rho^{-1}(T))$.
    \item Due to Lemma~\ref{lem:fatness_bounds}, $\vol(\rho^{-1}(T)) = \tfrac{1}{k^d}\vol(\rho^{-1}(\Delta))$.
    \item Finally, due to the bound on the volume distortion of $\rho$ in Eq.~\eqref{eq:bounds_rho}, $\vol(\rho^{-1}(\Delta))\geq (\sqrt{\kappa} h)^d\vol(\Delta)$.
\end{enumerate}
Thus, 
\[\vol(T)\geq\tfrac{1}{k^d}\cdot(\sqrt{\kappa} h)^{d+1}\vol(\Delta).\]
When $\kappa<0$, the same argument yields
\[\vol(T)\geq\tfrac{1}{k^d}\cdot\tfrac{1}{(\sqrt{|\kappa|} h)^{d+1}}\vol(\Delta).\]

Next, let $e_T$ be the longest edge of the simplex $T$, and $e_\Delta$ be the longest edge of the simplex $\Delta$. Let $\kappa>0$.
\begin{enumerate}
    \item Thanks to the metric bound~\eqref{eq:bounds_metric}, $\length(e_T)\leq \tfrac{\sqrt{2}}{\sqrt{\kappa}h}\length(\rho^{-1}(e_T))$.
    \item Let $\tilde{e}$ denote the longest edge of the Euclidean simplex $\rho^{-1}(\Delta)$. The edge $\rho^{-1}(e_T)$ might not be the longest edge of the simplex $\rho^{-1}(T)$, but it is definitely not longer than the longest edge. Combining this fact with Lemma~\ref{lem:fatness_bounds}, we obtain: $\length(\rho^{-1}(e_T))\leq \tfrac{1}{k}||M||\:||M^{-1}||\length(\tilde{e})$.
    \item Since the chordal distance between two points on a sphere is always shorter than their spherical distance, $\length(\tilde{e})\leq \length(\rho(\tilde{e}))$.
    \item Finally, since $\rho(\tilde{e})$ is an edge of the simplex $\Delta$, $\length(\rho(\tilde{e}))\leq \length(e_\Delta)$.
\end{enumerate}
Combining these results yields
\[\length(e_T)\leq\tfrac{1}{k}\cdot||M||\:||M^{-1}||\tfrac{\sqrt{2}}{\sqrt{\kappa}h}\length(e_\Delta).\]
Similarly, when $\kappa<0$,
\[\length(e_T)\leq\tfrac{1}{k}\cdot||M||\:||M^{-1}||\sqrt{\kappa}h\length(e_\Delta).\]

Thus, the fatness of the simplex $T$ is lower bounded by the fatness of the simplex $\Delta$ by:
\[
\fatness(T)\geq \mathscr{C} \cdot \fatness(\Delta),
\]
with 
 \begin{align}
       \mathscr{C} = \begin{cases}\label{eq:bounds_fatness}
        \frac{(\sqrt{\kappa} h)^{2d+1}}{2^{d/2}||M||^d||M^{-1}||^d} = \frac{\cos^{2d+1}(\sqrt{\kappa}r)}{2^{d/2}||M||^d||M^{-1}||^d}, & \text{if }\kappa>0,\\
        \frac{1}{(\sqrt{|\kappa|} h)^{2d+1}||M||^d||M^{-1}||^d} =\frac{1}{\cosh^{2d+1}(\sqrt{|\kappa|}r)||M||^d||M^{-1}||^d} , & \text{if }\kappa<0.
    \end{cases}
    \end{align}
Here, we inserted the value of $h$ from equation \eqref{eq:h}.
\end{proof} 




\section{Comparison of Subdivision Methods} \label{sec:Comp}
In \cite{brunck1}, Brunck proposed the following construction in two dimensions. We reformulate it here in a way that makes clear its extension to arbitrary dimensions. Rather than subdividing the Euclidean simplex according to Freudenthal's scheme, Brunck proceeds iteratively as follows (see also Figure~\ref{fig:brunck_subdivision}):

\begin{description} 
\item[{\bf Step 1.}] Consider the convex hull of the vertices in the ambient space, which yields a Euclidean simplex (as in the construction above). 
\item[{\bf Step 2.}] Subdivide the Euclidean simplex using Freudenthal's scheme of factor $2$. 
\item[{\bf Step 3.}] Project the Euclidean subsimplices radially to obtain subsimplices in the space of constant curvature. 
\item[{\bf Step 4.}] Apply the steps of the construction on each of the subsimplices.   
\end{description} 

\begin{figure}[h!]
	\centering
	\begin{subfigure}[b]{0.32\textwidth}
		\centering
		\includegraphics[width=\textwidth]{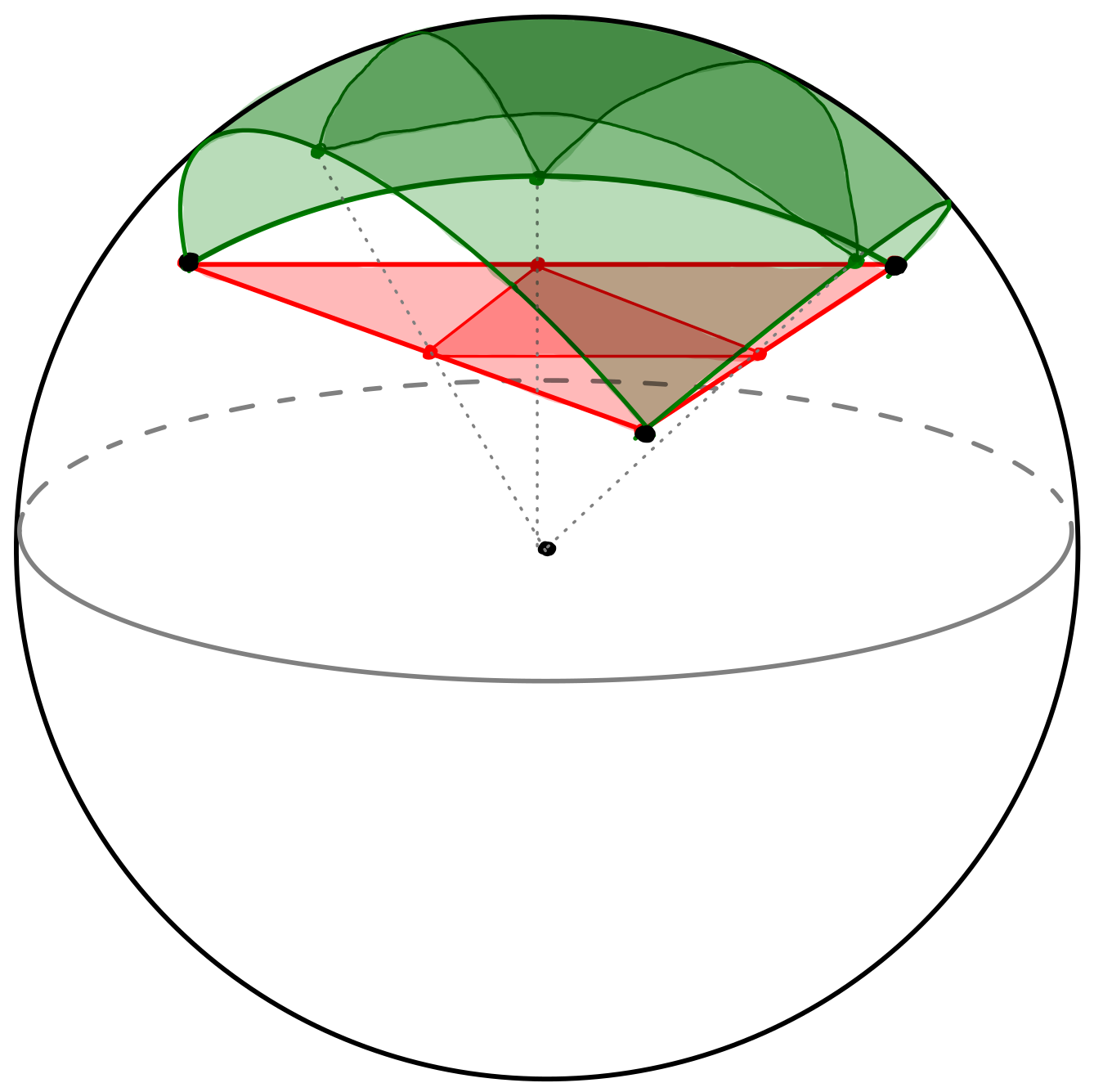}
	\end{subfigure}
	\begin{subfigure}[b]{0.32\textwidth}
		\centering
		\includegraphics[width=\textwidth]{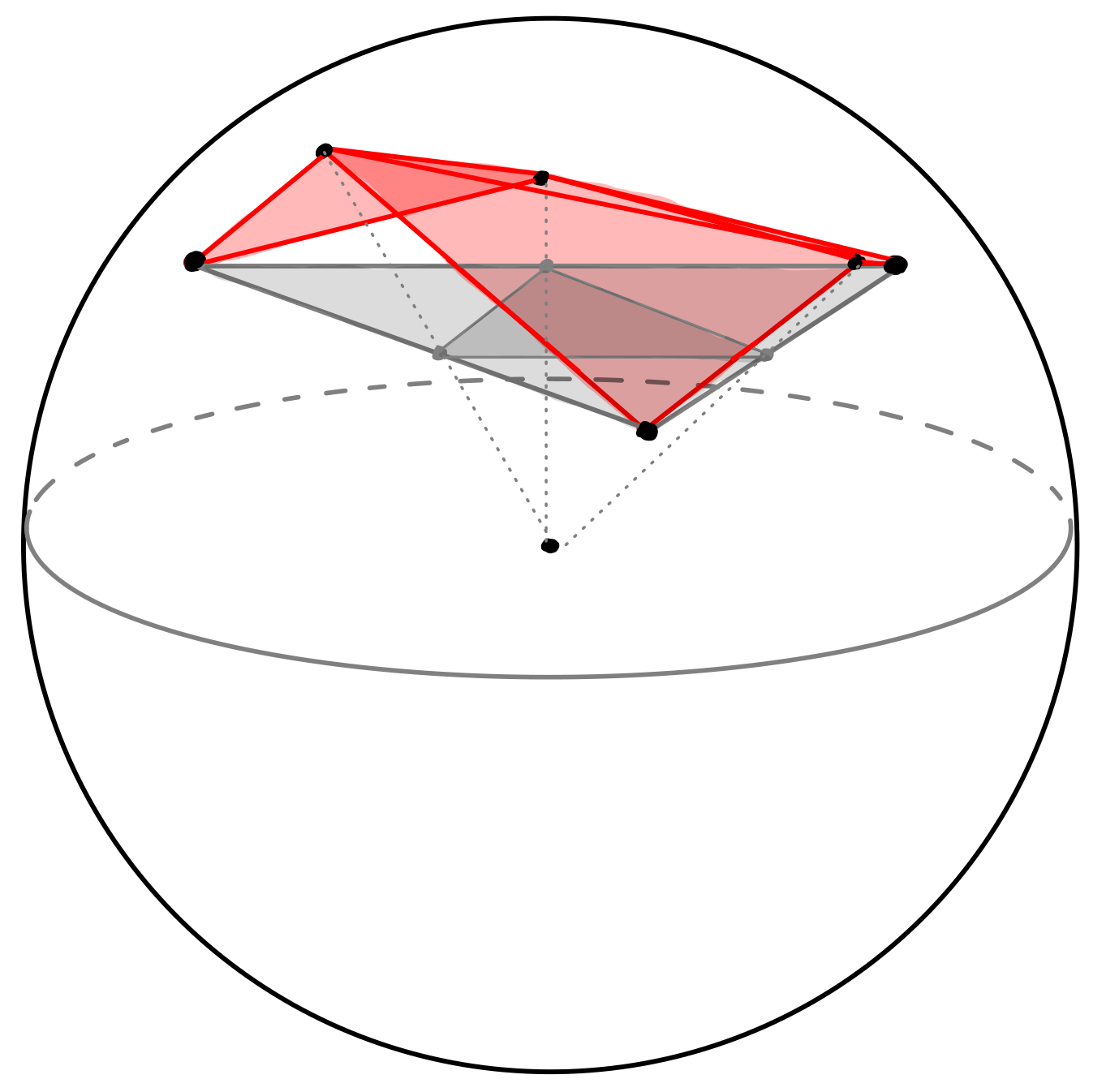}
	\end{subfigure}
	\begin{subfigure}[b]{0.32\textwidth}
		\centering
		\includegraphics[width=\textwidth]{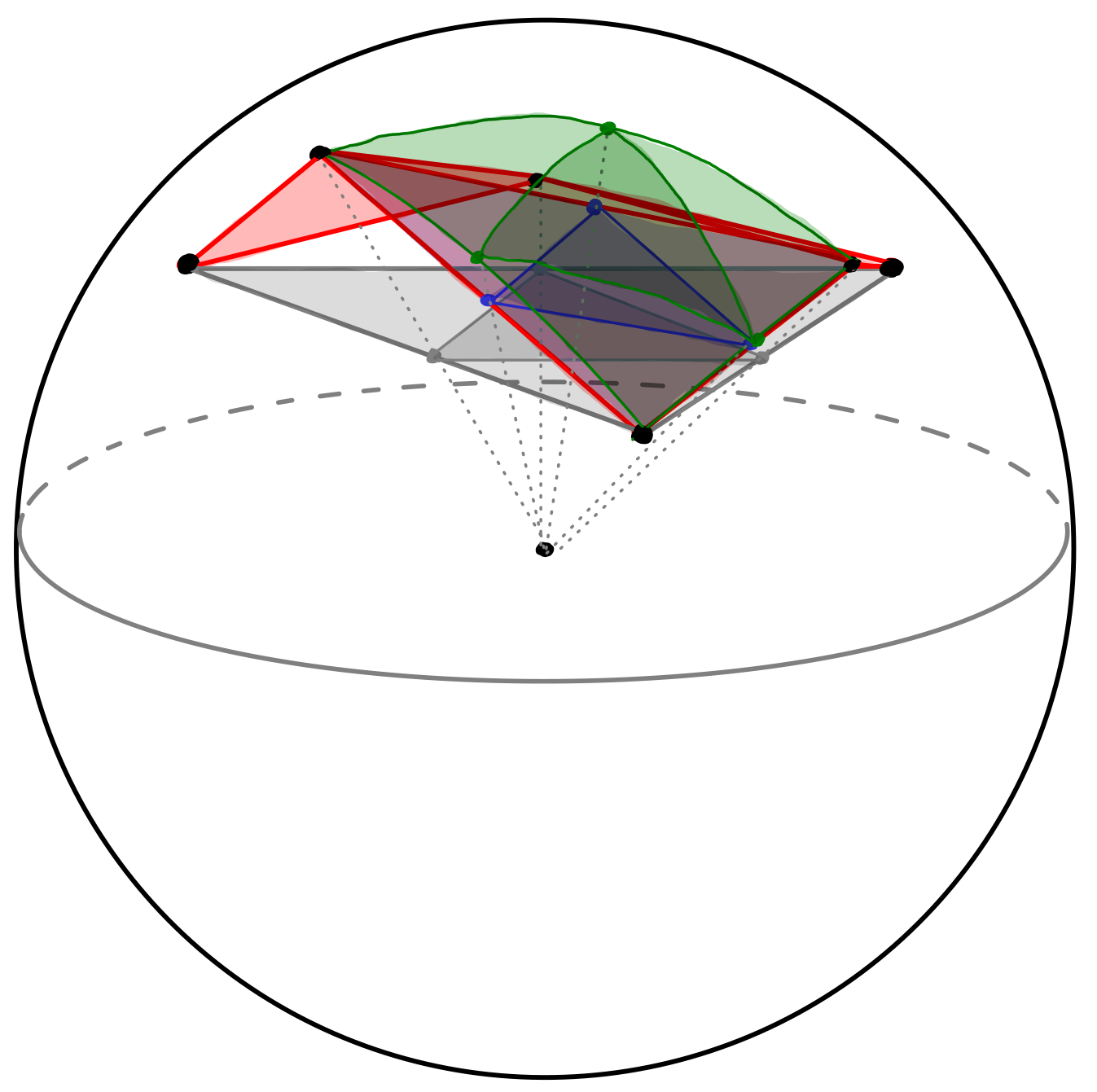}
	\end{subfigure}
	\caption{
		Left: Steps 1-3 of Brunck's subdivision scheme. Centre: The Euclidean convex hulls of subsimplices, on which the scheme is to be iterated. Right: Steps 1-3 are performed on one of the subsimplices.
	}
	\label{fig:brunck_subdivision}
\end{figure}

After $m$ iterations, this scheme yields a subdivision which is close (for small simplices) to, but different  from, the subdivision one would obtain using Freudenthal's scheme of factor $k= 2^m$. The difference stems from the following observation: If one subdivides a cord on a circle into $k$ equal parts and radially projects the pieces on the circle then the resulting parts of the circle do not all have the same length.

In contrast to this method, which alternates between subdivision by factor $k=2$ and a projection, our method projects only once, and subdivides by the desired factor immediately.

This approach has several advantages over Brunck’s. Its principal advantage is in implementation. In Brunck’s iterative method, the number of radial projections between Euclidean space and a space of constant curvature grows exponentially with each iteration. As a result, Brunck’s approach is substantially more computationally intensive and more susceptible to numerical error than ours, in which the radial projection is applied only once.

Another advantage of our approach is the flexibility to choose an arbitrary subdivision factor, rather than being restricted to powers of two. Moreover, we obtain explicit upper bounds on the edge lengths in the subdivision. Therefore, if a subdivision with a prescribed mesh size, measured by the maximum edge length, is required, the corresponding subdivision factor can be determined directly. This information is not available in Brunck’s approach and would instead have to be verified explicitly at each iteration, which is computationally expensive.

Nevertheless, Brunck’s approach may produce simplices of slightly better quality than ours. We believe, however, that any difference in quality is at most marginal.

\section{Final Remarks}\label{sec:final_remarks}
We conclude this article with several remarks and open questions arising from this work.
\paragraph{On Bound Optimization}
While the bounds in Eq.~\eqref{eq:bounds_fatness} of Theorem~\ref{thm:main} are explicit, they are not optimal and could be improved by taking the shape of the simplex into account. Such improvements could be made in two ways: first, by deriving a more refined estimate of the distortion in simplex quality under radial projection; and second, by optimizing the choice of the transformation $M$ (which is likely to depend on the dimension). We did not pursue these refinements here, since the main goal of the paper is to develop a general subdivision method.

\paragraph{An Alternative (and more General) Subdivision Scheme for Simplices in Spaces of Negative Curvature}

As noted in Remark~\ref{remark:circumradius_dependence}, our approach does not yield a subdivision scheme for simplices in negatively curved spaces with infinite circumradius, namely, simplices inscribed in horospheres or in equidistant hypersurfaces to hyperplanes\footnote{These are sometimes also called `hyperspheres'.}. Nevertheless, such simplices may still have bounded quality.

Instead of using the radial projection, which fails in these cases, one can subdivide such simplices using the Freudenthal--Kuhn triangulation in combination with the Klein--Beltrami model of hyperbolic geometry. Indeed, in the Klein--Beltrami model, hyperbolic simplices are represented by Euclidean simplices, and may therefore be subdivided directly using the Freudenthal--Kuhn triangulation. Bounds on the quality of the simplices in the resulting subdivision can then be derived from the Riemannian metric induced by the model, in much the same way as in the proofs of Lemma~\ref{lem:fatness_bounds} and Theorem~\ref{thm:main}.

The main advantage of this subdivision scheme is its greater flexibility compared with our radial-projection-based approach. In particular, it applies to every simplex in a space of constant negative curvature, including simplices with ideal vertices. On the other hand, the resulting quality bounds are likely to be significantly weaker than those obtained here.

If one chooses to use this scheme, it seems advisable to position the simplex so that its barycenter or incentre lies at the center of the ball in the Klein--Beltrami model, in order to optimize the distortion bounds.

\section*{Acknowledgements}
We are greatly indebted to Ramsay Dyer for discussion. We thank Florestan Brunck for pointing out his work and discussion. 

\bibliographystyle{plainurl}
\bibliography{biblio} 

\appendix

\section{Proof of Lemma~\ref{lemma:example_fatness_loss}}\label{sec:proof_fatness_loss}

    Without loss of generality we may assume that 
    \begin{itemize}
            \item the tetrahedron $\Delta$ has vertices
        \[v_0 = (0,0,0),\quad v_1 = (1,0,0), \quad v_2 = (\tfrac{1}{2}, \tfrac{\sqrt{3}}{2}, 0), \quad v_3=(\tfrac{1}{2},\tfrac{\sqrt{3}}{6}, \sqrt{\tfrac{2}{3}}),\]
        \item the preimage tetrahedron $M^{-1}(\Delta)$ lies in the unit cube $[0,1]^3$.
    \end{itemize}

    Since $M^{-1}(\Delta)$ lies in the unit cube, it is determined by a permutation $\sigma\in S_3$, with its vertices being
    \[
    w_0 = 0, \quad w_1 = e_{\sigma(1)}, \quad w_2 = e_{\sigma(1)}+e_{\sigma(2)}, \quad w_3 =e_{\sigma(1)}+e_{\sigma(2)}+ e_{\sigma(3)}.
    \]
    Here, $e_i$ denotes the $i$-th unit vector. Due to the symmetry of the regular simplex $\Delta$, we may in addition assume without loss of generality that
    \begin{itemize}
        \item $M(w_i)=v_i$.
    \end{itemize}
    In particular, $M$ is a linear transformation.
    
    Let $P$ be the permutation matrix related to the permutation $\sigma$. Then the map $M\circ P$ is a linear map mapping
    \[
    M\circ P(e_1) = v_1, \quad M\circ P(e_1+e_2) = v_2, \quad M\circ P(e_1+e_2+e_3) = v_3,
    \]
    and can be represented by the matrix
    \[
    M\circ P = \begin{pmatrix}
1 & -\frac{1}{2} & 0\\[4pt]
0 & \frac{\sqrt{3}}{2} & -\frac{\sqrt{3}}{3}\\[4pt]
0 & 0 & \sqrt{\frac{2}{3}}
\end{pmatrix}.
    \]
    We now analyze the fatness of the image under $M\circ P$ of each simplex in the Freudenthal--Kuhn triangulation of the unit cube. To that end, notice that all simplices in the Freudenthal--Kuhn triangulation are congruent, and thus have the same volume. Since the volume of the unit cube equals 1, and there are $3!=6$ simplices in the unit cube, each simplex has volume $\tfrac{1}{6}$. Thus, the volume of the image under $M\circ P$ of each simplex equals
    \[|\det(M\circ P)| \cdot \tfrac{1}{6} = \tfrac{1}{6\sqrt{2}}.\]

    Next, let us examine the image of the edges of the Freudenthal--Kuhn triangulation of the unit cube. 
    The edges of the triangulation are
    \begin{itemize}
        \item all edges of the cube --- the edges $e_1,e_2,e_3$, and their parallels,
        \item the diagonals $e_1+e_2, e_1+e_3, e_2+e_3$ and their parallels, and
        \item the spacial diagonal $e_1+e_2+e_3$.
    \end{itemize}
The image of the unit cube under the map $M\circ P$ is a parallelopiped. The images of the edges $e_1, e_2, e_3, e_1+e_2, e_2+e_3, e_1+e_2+e_3$ and their parallels have edge length 1, but the images of the diagonal $e_1+e_3$ and its parallel have edge length
\[
|(M\circ P)(1,-\tfrac{1}{\sqrt{3}},\sqrt{\tfrac{2}{3}})|=\sqrt{2}.
\]

Thus, any simplex in the Freudenthal--Kuhn triangulation of the unit cube with either the vertex $e_2$\footnote{Since each simplex contains the vertex $e_1+e_2+e_3$, the simplices containing the vertex $e_2$ will contain the edge connecting the vertex $e_2$ to the vertex $e_1+e_2+e_3$. This edge is parallel to $e_1+e_3$.} or the vertex $e_1+e_3$ will be mapped to a simplex with the longest edge of length $\sqrt{2}$ by the map $M\circ P$. Note that four out of the six simplices in the triangulation of the cube satisfy this condition.

Finally, recall that $P$ is a permutation matrix related to a permutation $\sigma$, and it maps $e_i$ onto $e_{\sigma(i)}$. Thus, the map $M$ maps each simplex with either the vertex $e_{\sigma^{-1}(2)}$ or the vertex $e_{\sigma^{-1}(1)}+e_{\sigma^{-1}(3)}$ to a simplex with the longest edge of length $\sqrt{2}$. These simplices have the fatness:
\[
\frac{\tfrac{1}{6\sqrt{2}}}{(\sqrt{2})^3} = \frac{1}{24}.
\]
In comparison, the original simplex $\Delta$ has the fatness
\[
\fatness(\Delta)=\frac{1}{6\sqrt{2}}.
\]

\end{document}